\newtheorem{theorem}{Theorem}
\begin{document}

\title{Wireless Power Transfer Under Kullback-Leibler Distribution Uncertainty: A Mathematical Framework}

\author{Ioannis Krikidis,~\IEEEmembership{Fellow,~IEEE}  
\thanks{This work has received funding from the European Research Council (ERC) under the European Union's Horizon 2020 research and innovation programme (Grant agreement No. 819819).}
\thanks{I. Krikidis is with the Department of Electrical and Computer Engineering, Faculty of Engineering, University of Cyprus, Nicosia 1678 (E-mail: {\sf krikidis@ucy.ac.cy}).}}

\maketitle
\vspace{-0.8cm}
\begin{abstract}
In this letter, we study the performance of a wireless power transfer system under  energy harvesting distribution uncertainty. The uncertainty captures practical nonidealities of the rectification process and is modelled as the maximum Kullback-Leibler distance of the actual distribution from a nominal distribution. The case where symmetrized divergence is considered for the statistical distance between distributions is also considered. By formulating a convex optimization problem, we investigate a mathematical framework that provides closed form expressions for the minimum average harvested energy (the worst performance) and the associated statistical distribution. Theoretical results show that the energy harvesting performance is significantly degraded as the level of the uncertainty increases.
\end{abstract}
\vspace{-0.6cm}
\begin{keywords}
Wireless power transfer, uncertainty, Kullback-Leibler divergence, average energy harvested.
\end{keywords}

\vspace{-0.1cm}
\section{Introduction}

\IEEEPARstart{W}{ireless} power transfer (WPT) is a new technology which allows low-power devices to harvest energy from dedicated/ambient radio-frequency signals. From the seminal work in \cite{RUI2}, where the authors introduced the principles of WPT and the basic network architectures, WPT has attracted a lot of attention from both academia and industry. Recent studies take into account the nonlinearity of the rectification circuit and study WPT from information theory, signal-processing and/or networking perspective \cite{CLE}.  

One of the most fundamental questions in WPT is the modelling of the energy harvesting (EH) process. To address this question, several deterministic models have been proposed in the literature by trading off simplicity and accuracy. From the initial linear model which has been used in \cite{RUI2}, the piece-wise linear model captures (in a simple way) the three operation regions of the rectification circuit \cite{LOP}. On the other hand, more sophisticated parametric nonlinear functions (e.g., sigmoidal \cite{BOS}, fractional \cite{CHE}, etc.) have been proposed, where the parameters are tuned by using real data and curve fitting tools. Although these harvesting models try to approximate the behaviour of the circuit from a communication theory perspective, they neglect important practical phenomena e.g., antenna mismatching, parasitic effects, RC filter etc \cite{PAN}. Therefore, these deterministic models are associated with specific operation points of the rectification circuit and are not able to capture its general behaviour.  To make these models more accurate and enhance their practical interest, they should be extended by capturing uncertainty; for systems with low computation/processing capabilities (e.g.,  WPT), this uncertainty can be modelled by a partial knowledge of the EH statistical distribution (including wireless channel and EH process) \cite{LAP}.

A fundamental model for communication under channel distribution uncertainty is the compound channel \cite[Sec. III]{LAP}, where the transmitter knows that the (unknown) channel distribution is within a given Kullback-Leibler (KL) divergence (uncertainty) from a nominal distribution. In this case, the Shannon capacity is associated to a class of fading distributions  rather than a specific distribution. The work in \cite{CHA} studies the compound outage probability of a communication system under KL fading distribution uncertainty and identifies two fundamental operation regimes. The same mathematical framework is used in \cite{CHA2} to study the ergodic compound capacity of a multiple-input multiple-output system with incomplete channel state information and in \cite{CHA3} to design the optimal control of a stochastic system. The work in \cite{BIG}  extends this framework for more general objective functions and investigates the robustness of modulation and coding schemes in body-area networks.

The objective of our work is to propose a mathematical framework that integrates an EH distribution uncertainty in WPT systems. Inspired by the fundamental concept of compound channel \cite{CHA,BIG}, we study the minimum average harvested energy for a basic point-to-point link when the actual EH distribution is within a certain KL uncertainty from a nominal distribution. By formulating a convex optimization problem with respect to the actual distribution, we admit a general closed-form  solution via the evaluation of the Karush-Kuhn-Tucker (KKT) optimality conditions. We study the regular KL divergence metric and its asymmetric counterpart  ($\mathcal{D}_{\text{KL}}(f_0||f)$ and $\mathcal{D}_{\text{KL}}(f||f_0)$) as well as the symmetrized divergence. We demonstrate that the EH performance (average harvested energy) significantly decreases as the uncertainty distance increases. Simplified closed-form expressions are derived for the linear EH model with Rayleigh fading, where the nominal distribution is exponential. The case where the class/type of the actual EH distribution is known at the transmitter is also discussed and simple expressions are proposed. The proposed mathematical framework is general and can be used as a basis to systematically study WPT applications.   
  
\vspace{-0.29cm}  
\section{Average harvested energy under uncertainty}\label{sec2}

We study a fundamental point-to-point link consisting of one transmitter that transmits energy signals to a single receiver.  Due to the wireless channel and the practical nonidealities of the rectenna circuit \cite{PAN}, the energy harvested becomes a random variable with a nominal distribution $f_0(x)\geq 0$ and an unknown true distribution $f(x)\geq 0$. According to the principles of compound channel \cite{LAP}, we assume that this nominal distribution is within a KL divergence $\mathcal{D}_{\text{KL}}(f_0||f)\leq d$ from the actual distribution, where $\mathcal{D}_{\text{KL}}(f||g)=\int f(x)\log \frac{f(x)}{g(x)}dx$ denotes the KL distance between the distributions $f(x)$ and $g(x)$, and $d$ is the maximum KL divergence; both $f_0(x)$ and $d$ are known at the transmitter through an appropriate training/estimation process and feedback channel\footnote{The nominal distribution $f_0(x)$ refers to the selected model, while the parameter $d$ represents prior knowledge of the approximate behaviour of the distribution and highly depends on the size/quality of the available data, the estimation mechanism \cite{MCK} (e.g., dynamics perturbations, hypothesis test, etc.) as well as the technical capabilities of  the communication and WPT infrastructure (i.e., computational resources, feedback channel, rectenna circuit, etc.) \cite{CHA,CHA2}.}. It is worth noting that the parameters $f_0(x)$ and $d$ determine the EH uncertainty; they are considered constant for the whole communication and independent on the transmit parameters. If the performance of the WPT system is characterized by the average harvested energy, the worst case scenario is captured by the following optimization problem 
\vspace{-0.1cm}
\begin{align}
&(P1)\;\;\min_f \int x f(x)dx \\
&\;\;\;\;\;\;\;\;\;\;\text{s. t.}\; \mathcal{D}_{\text{KL}}(f_0||f)\leq d,\;\;\;\int f(x)dx=1. \label{co1}
\end{align}
The solution of the optimization problem (P1) gives the minimum energy that can be harvested, when there is KL uncertainty for the EH distribution. This is an important information for the design of a WPT system with critical quality of service constraints (e.g., design a robust transmission strategy that ensures a minimum average harvested energy for all possible cases; for the whole set of uncertainty). If $\mathcal{E}=\int x f(x)dx$ denotes the solution to (P1), the following theorem gives the associated actual distribution.
\vspace{-0.2cm}
\begin{theorem}\label{th1}
The actual distribution that achieves the minimum average harvested energy is $f(x)=\frac{1}{q(\mu^*)}\frac{f_0(x)}{x+\mu^*}$, where $\mu^*$ is the solution of the ($1$- dimensional) equation $\int f_0(x) \log[q(\mu)(x+\mu)]dx=d$ and $q(\mu)=\int \frac{f_0(x)}{x+\mu}dx$.
\end{theorem}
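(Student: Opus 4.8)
The plan is to treat (P1) as a convex program in the function $f$ and to solve it through its Karush--Kuhn--Tucker conditions. The objective $\int x f(x)\,dx$ is linear in $f$, the normalization constraint $\int f(x)\,dx=1$ is affine, and the map $f\mapsto \mathcal{D}_{\text{KL}}(f_0||f)=\int f_0(x)\log f_0(x)\,dx-\int f_0(x)\log f(x)\,dx$ is convex because $-\log$ is convex; hence the KKT conditions are necessary and sufficient. Since $f=f_0$ gives $\mathcal{D}_{\text{KL}}(f_0||f_0)=0<d$ whenever $d>0$, Slater's condition holds and strong duality applies, so it suffices to exhibit a primal--dual point satisfying stationarity, primal/dual feasibility and complementary slackness.

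First I would form the Lagrangian
\begin{equation}
L(f,\mu,\nu)=\int x f(x)\,dx+\mu\Big(\int f_0(x)\log\frac{f_0(x)}{f(x)}\,dx-d\Big)+\nu\Big(\int f(x)\,dx-1\Big),
\end{equation}
with $\mu\geq 0$ associated with the divergence constraint and $\nu\in\mathbb{R}$ with the normalization, and impose stationarity by setting the pointwise derivative with respect to $f(x)$ to zero:
\begin{equation}
x-\mu\,\frac{f_0(x)}{f(x)}+\nu=0\quad\Longrightarrow\quad f(x)=\frac{\mu\,f_0(x)}{x+\nu}.
\end{equation}
A short argument rules out $\mu=0$, since that would force $x+\nu=0$ on the support of $f_0$, which is impossible for a nondegenerate energy distribution; therefore $\mu>0$ and, by complementary slackness, the divergence constraint is tight at the optimum, $\mathcal{D}_{\text{KL}}(f_0||f)=d$.

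It then remains to determine the multipliers from the two active constraints. Enforcing $\int f(x)\,dx=1$ on $f(x)=\mu f_0(x)/(x+\nu)$ yields $\mu\,q(\nu)=1$ with $q(\nu)=\int\frac{f_0(x)}{x+\nu}\,dx$, so that $f(x)=\frac{1}{q(\nu)}\frac{f_0(x)}{x+\nu}$; relabelling the normalization multiplier $\nu$ as $\mu^*$ gives precisely the form claimed in the theorem. Substituting this density into the tight divergence constraint and using $\log\frac{f_0(x)}{f(x)}=\log\!\big[q(\mu^*)(x+\mu^*)\big]$ gives
\begin{equation}
\int f_0(x)\log\!\big[q(\mu^*)(x+\mu^*)\big]\,dx=d,
\end{equation}
which is exactly the one-dimensional equation that defines $\mu^*$.

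I expect the main difficulty to lie not in the core derivation but in the technical side conditions: one has to verify that the stationary $f$ is a legitimate probability density, i.e. that $x+\mu^*>0$ on the support of $f_0$ (automatic once $\mu^*\geq 0$ and the harvested energy is nonnegative) so that $f\geq 0$ and the integrals $q(\mu^*)$ and $\int f_0(x)\log[q(\mu^*)(x+\mu^*)]\,dx$ are finite, and that the defining equation for $\mu^*$ is solvable. The latter I would settle by showing that $\mu\mapsto\int f_0(x)\log[q(\mu)(x+\mu)]\,dx$ is continuous and monotone and attains the value $d$, which simultaneously yields existence and uniqueness of $\mu^*$. Once this is in place, convexity of (P1) together with the fact that the constructed triple satisfies all KKT conditions certifies that $f$ is globally optimal, which proves the theorem.
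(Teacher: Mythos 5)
Your proposal is correct and follows essentially the same route as the paper's own proof: form the Lagrangian, apply the KKT stationarity condition to get $f(x)\propto f_0(x)/(x+\mu^*)$, use normalization to fix the scale via $q(\mu^*)$, and use complementary slackness to obtain the one-dimensional equation for $\mu^*$. Your added remarks on Slater's condition, ruling out a zero multiplier, and the solvability of the defining equation are welcome refinements but do not change the argument.
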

\begin{proof}
The proof is given in Appendix \ref{ap1}.
\end{proof}
We now consider the performance of the system for two asymptotic cases i.e., $d=0$ and $d\rightarrow \infty$. For the case where $d=0$, there is not distribution uncertainty and thus $f(x)=f_0(x)$ and $\mathcal{E}_0=\int x f_0(x)dx$ (expectation of the nominal distribution). On the other hand, for the case $d\rightarrow \infty$, we have $\int f_0(x) \log[q(\mu)(x+\mu^*)]dx\rightarrow \infty$ which after simple manipulations\footnote{We have $\int f_0(x)\log q(\mu^*)+\int f_0(x)\log(x+\mu^*)\leq \log q(\mu^*) +\mathcal{E}_0+\mu^*$, where we used the basic inequality $\log(x)\leq x$; since $\mu^*$ and $\mathcal{E}_0$ are finite, we have $q(\mu^*)\rightarrow \infty$.} gives $\mu^*\rightarrow 0$, $q(\mu^*)\rightarrow \infty$ and thus $\mathcal{E}\rightarrow 0$. In the following discussion, we apply Theorem \ref{th1} for the case where the nominal distribution is exponential\footnote{The exponential distribution is used for the sake of simplicity to introduce the proposed mathematical framework. It also refers to a linear/piece-wise linear EH model with Rayleigh channel fading.}. 

\noindent {\it Special Case (Exponential distribution):} For the case where the nominal distribution is exponential i.e., $f_0(x)=\lambda_0 \exp(-\lambda_0 x)$, we have
\begin{align}
&f(x)=\frac{1}{q(\mu^*)}\frac{\lambda_0\exp(-\lambda_0 x)}{x+\mu^*}=\frac{\exp(-\lambda_0(x+\mu^*))}{E_1(\lambda_0 \mu^*)(x+\mu^*)}, \\
&q(\mu^*)=\int_0^{\infty}\frac{\lambda_0\exp(-\lambda_0 x)}{x+\mu^*}dx=\lambda_0 \exp(\lambda_0 \mu^*)E_1(\lambda_0 \mu^*), \label{q1} \\
\mathcal{E}&=\int_{0}^{\infty}xf(x)dx=\frac{1}{\lambda_0  \exp(\lambda_0 \mu^*)E_1(\lambda_0 \mu^*)}-\mu^*, \label{q2}
\end{align}
where \eqref{q1}, \eqref{q2} employ the expressions in \cite[3.352.4]{GRA}, \cite[3.353.5]{GRA}, respectively, and $E_1(x)=\int_{x}^{\infty}\frac{\exp(-t)}{t}dt$ is the exponential integral \cite{GRA}. The parameter $\mu^*$ is the solution of the simplified equation $q(\mu)/\lambda_0 +\log (\mu q(\mu))=d$. We can also calculate the cumulative density function (CDF) of the distribution, which is useful for the evaluation of the energy outage probability. Specifically, the CDF of the actual distribution is equal to \cite[3.352.1]{GRA}
\begin{align}
F(x)=\!\mathbb{P}(X\leq x)=\!\!\int_{0}^{x}\!\!f(x)dx=\!1\!-\!\frac{E_1(\lambda_0(x+\mu^*))}{E_1(\lambda_0 \mu^*)}.
\end{align}

\vspace{-0.5cm}  
\subsection{KL divergence asymmetry}

One of the main properties of the KL divergence is that is not symmetric i.e., $\mathcal{D}_{\text{KL}}(f||g) \neq \mathcal{D}_{\text{KL}}(g||f)$. Here, we replace the KL constraint in \eqref{co1} with $\mathcal{D}_{\text{KL}}(f||f_0)\leq d$ i.e., 
\begin{align}
&(P2)\;\;\min_f \int x f(x)dx \\
&\;\;\;\;\;\;\;\;\;\;\text{s. t.}\; \mathcal{D}_{\text{KL}}(f||f_0)\leq d,\;\;\;\int f(x)dx=1. \label{co2}
\end{align}
By using a similar mathematical framework with the formulation in (P1), we state the following theorem.
\begin{theorem}\label{th2}
The actual distribution that achieves the minimum average harvested energy is $f(x)=\frac{\exp(-x/s^*)f_0(x)}{\int \exp(-x/s^*)f_0(x)dx}$, where the variable $s^*$ is given numerically by solving the ($1$-dimensional) equation in \eqref{s1}.
\end{theorem}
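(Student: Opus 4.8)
The plan is to exploit the convexity of $(P2)$ and solve it by Lagrangian/variational methods, paralleling the treatment of $(P1)$ in Appendix~\ref{ap1}. First I would observe that the objective $f\mapsto\int xf(x)\,dx$ and the normalization $\int f\,dx=1$ are linear in $f$, while $f\mapsto\mathcal{D}_{\text{KL}}(f\|f_0)=\int f(x)\log\frac{f(x)}{f_0(x)}\,dx$ is convex; hence $(P2)$ is a convex program, and since $f_0$ itself is strictly feasible whenever $d>0$ (Slater's condition), the KKT conditions are necessary and sufficient for a global minimizer. I would then introduce a multiplier $\lambda\ge 0$ for the KL constraint and $\nu\in\mathbb{R}$ for the normalization constraint and form $L(f,\lambda,\nu)=\int xf\,dx+\lambda\big(\int f\log(f/f_0)\,dx-d\big)+\nu\big(\int f\,dx-1\big)$.

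Next I would take the functional derivative of $L$ with respect to $f(x)$ and set it to zero pointwise, obtaining $x+\lambda\big(\log(f(x)/f_0(x))+1\big)+\nu=0$, i.e.\ $f(x)=f_0(x)\exp(-x/\lambda-\nu/\lambda-1)$, an exponentially tilted copy of the nominal density. Setting $s^*:=\lambda$ and absorbing the remaining $x$-independent factor into a normalizing constant fixed by $\int f\,dx=1$ yields exactly the claimed form $f(x)=\exp(-x/s^*)f_0(x)\big/\!\int\exp(-x/s^*)f_0(x)\,dx$. I would then rule out $\lambda=0$ (which would force $x+\nu\equiv 0$, impossible), so $\lambda>0$ and, by complementary slackness, the KL constraint is active: $\mathcal{D}_{\text{KL}}(f\|f_0)=d$. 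Substituting the optimal $f$ into this equality and using $\log(f(x)/f_0(x))=-x/s^*-\log\!\int\exp(-x/s^*)f_0(x)\,dx$ collapses it into a single scalar equation in $s^*$, which is precisely \eqref{s1}; its root determines the distribution and hence $\mathcal{E}$. The nonnegativity constraint $f\ge 0$ is automatically satisfied since $f_0\ge 0$ and the exponential factor is positive, so it remains inactive and requires no separate multiplier.

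The main obstacle is not the variational step but the scalar reduction: I expect the real work to be (i) justifying the interchange of integration and differentiation needed to write $\mathcal{E}$ and the active-constraint equation compactly in terms of $Z(s):=\int\exp(-x/s)f_0(x)\,dx$ and $Z'(s)$, which calls for a dominated-convergence / finite-moment argument on $f_0$; and (ii) showing that \eqref{s1} admits a solution $s^*>0$, ideally a unique one, by establishing monotonicity of its left-hand side in $s$ (equivalently, that $d\mapsto\mathcal{E}$ is monotone), so that the phrase ``given numerically by solving'' is well posed. A secondary point is confirming strong duality in the infinite-dimensional setting, which I would dispatch via the Slater argument above together with the convexity already noted.
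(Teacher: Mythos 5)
Your proposal is correct and follows essentially the same route as the paper's Appendix B: convexity plus KKT, stationarity yielding the exponentially tilted density $f(x)\propto \exp(-x/s^*)f_0(x)$, and complementary slackness with $s^*>0$ reducing the active KL constraint to the scalar equation \eqref{s1}. Your extra remarks (ruling out a zero multiplier via Slater, and the dominated-convergence/uniqueness caveats for the scalar reduction) are refinements the paper leaves implicit, but they do not change the argument.
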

\begin{proof}
The proof is given in Appendix \ref{ap2}.
\end{proof}
By using similar arguments with the problem (P1), we can see that $f(x)=f_0(x)$ when $d=0$; the case $d\rightarrow \infty$ will be discussed bellow for a specific example of nominal distribution. 

\noindent {\it Special Case (Exponential distribution):} For the case where $f_0(x)=\lambda_0 \exp(-\lambda_0 x)$, we have $f(x)=(1/s^*+\lambda_0) \exp(-x(1/s^*+\lambda_0))$, which shows that the actual distribution is also exponential with parameter $(1/s^*+\lambda_0)$. In addition, we have $\zeta(s^*)=\frac{\lambda_0}{(\lambda_0+1/s^*)^2}$, $\psi_1(s^*)=\frac{\lambda_0}{(\lambda_0+1/s^*)}$ and thus $\mathcal{E}=\frac{\zeta(s^*)}{\psi_1(s^*)}=\frac{s^*}{1+s^* \lambda_0}$; $s^*$ is the solution of the simplified nonlinear equation $\xi(s)=d$ where $\xi(s)=s\log \frac{Z(s)}{\lambda_0}-\frac{1}{Z(s)}$ and $Z(s)=\frac{1}{s}+\lambda_0$ with $s \in (0, \infty)$.

To study the asymptotic performance, we can see that the function $\xi(s)$ is increasing ($\xi'(s)\geq 0$) in the interval $(0, \overline{s}^*)$ and decreasing  ($\xi'(s)\leq 0$) in the interval $(\overline{s}^*, \infty)$, where $\xi'(s)=\log\left(1+\frac{1}{\lambda_0 s}\right)-\frac{2+\lambda_0 s}{(\lambda_0 s+1)^2}$, and $\xi'(\overline{s}^*)=0$. Therefore, for $d\geq \xi'(s)$ we have $\mathcal{E}=\frac{\overline{s}^*}{1+\overline{s}^*\lambda_0}$; for $d\rightarrow 0$, we have $\lim_{s\rightarrow \infty}\xi(s)\rightarrow 0$ and thus $s^*\rightarrow \infty$ which gives $\mathcal{E}=1/\lambda_0$. This is the basic difference between the two KL divergence metrics; $\mathcal{D}_{\text{KL}}(f||f_0)$ converges to a nonzero constant floor, while $\mathcal{D}_{\text{KL}}(f_0||f)$ asymptotically converges to zero. 

\vspace{-0.3cm}
\subsection{Symmetrized divergence}\label{sd}

The symmetrized divergence is another statistical distance metric, which in contrast to the KL divergence metrics, it satisfies the property of symmetry \cite{MOU}. Specifically, the symmetrized divergence is defined as 
\begin{align}
\mathcal{D}_{\text{SYM}}(f_0, f)=\frac{1}{2}\big[\mathcal{D}_{\text{KL}}(f_0||f)+\mathcal{D}_{\text{KL}}(f||f_0) \big],
\end{align}
and it can be seen that $\mathcal{D}_{\text{SYM}}(f_0, f)=\mathcal{D}_{\text{SYM}}(f, f_0)$; we note that $\mathcal{D}_{\text{SYM}}$ might not satisfy the triangle inequality and therefore it is not an actual distance. In this case, the original optimization problem is formulated as follows 
\begin{align}
&(P3)\;\;\min_f \int x f(x)dx \\
&\;\;\;\;\;\;\;\;\;\;\text{s. t.}\; \mathcal{D}_{\text{SYM}}(f_0, f)\leq d,\;\;\;\int f(x)dx=1. \label{co3}
\end{align}
The optimization problem (P3) is also convex since the symmetrized divergence is a convex statistical metric. In the following theorem, we provide the actual distribution that solves (P3) as a function of the nominal distribution.  
\begin{theorem}\label{th3}
The actual distribution that achieves the minimum average harvested energy is $f(x)=\frac{f_0(x)}{W_0 \left(\exp \left(\frac{2(x+\mu^*)}{s^*} \right) \right)}$, where the parameters $s^*>0$, $\mu^*$ are computed numerically by solving the (two-dimensional) system of equations in \eqref{x1}, \eqref{x2}.   
\end{theorem}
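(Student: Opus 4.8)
The plan is to mirror the variational route used for Theorems~\ref{th1} and~\ref{th2}. Since (P3) is a convex program (linear objective over the convex feasible set cut out by the convex symmetrized divergence and the affine normalization), the KKT conditions are necessary and sufficient, so it suffices to exhibit an $f$ together with multipliers that satisfy stationarity, primal feasibility and complementary slackness. First I would form the Lagrangian
\[
L(f;\lambda,\nu)=\int x f(x)\,dx+\lambda\Big(\tfrac{1}{2}\!\int\! f_0\log\tfrac{f_0}{f}\,dx+\tfrac{1}{2}\!\int\! f\log\tfrac{f}{f_0}\,dx-d\Big)+\nu\Big(\!\int\! f(x)\,dx-1\Big),
\]
with $\lambda\ge 0$ attached to the divergence inequality and $\nu\in\mathbb{R}$ to the normalization.

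Next I would take the Gâteaux derivative of $L$ with respect to $f$; the only delicate point is differentiating the two divergence terms at once, where $\int f_0\log(f_0/f)$ contributes $-\lambda f_0/(2f)$ and $\int f\log(f/f_0)$ contributes $(\lambda/2)\big(\log(f/f_0)+1\big)$. Setting the derivative to zero gives the pointwise condition
\[
x+\frac{\lambda}{2}\Big(\log\frac{f(x)}{f_0(x)}+1-\frac{f_0(x)}{f(x)}\Big)+\nu=0 .
\]
Writing $t(x)=f(x)/f_0(x)$ and making the identifications $s^*=\lambda$, $\mu^*=\nu+\lambda/2$, this becomes $\log t-1/t=-2(x+\mu^*)/s^*$. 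Because $t\mapsto\log t-1/t$ is strictly increasing on $(0,\infty)$ with range $\mathbb{R}$, there is a unique positive root for each $x$; substituting $w=1/t>0$ turns the equation into $w e^{w}=\exp\!\big(2(x+\mu^*)/s^*\big)$, whose unique solution with $w>0$ and positive argument is the principal branch $w=W_0\!\big(\exp(2(x+\mu^*)/s^*)\big)$ (the $W_{-1}$ branch, which takes values $\le-1$, is excluded). This yields $f(x)=f_0(x)/W_0\!\big(\exp(2(x+\mu^*)/s^*)\big)$, which is automatically nonnegative.

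It then remains to pin down $s^*$ and $\mu^*$ from the two primal conditions. Normalization $\int f(x)\,dx=1$ gives one equation, \eqref{x1}. The divergence constraint is active at the optimum — were it slack, shifting a little mass of $f$ toward $x=0$ would preserve feasibility while strictly decreasing $\int x f\,dx$ — so $\mathcal{D}_{\text{SYM}}(f_0,f)=d$ furnishes the second equation, \eqref{x2}; moreover $\lambda=0$ is incompatible with the stationarity relation (it would force $x+\nu\equiv 0$), hence $s^*=\lambda>0$. Solvability of this $2\times 2$ system for $(s^*,\mu^*)$ with $s^*>0$ closes the argument.

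The main obstacle I anticipate is the clean reduction of the transcendental stationarity equation $\log t-1/t=\text{affine}(x)$ to the Lambert-$W$ form on the correct branch, together with the bookkeeping that relates the abstract multipliers $(\lambda,\nu)$ to the stated parameters $(s^*,\mu^*)$; once the form of $f$ is in hand, substituting it into the two constraints to produce \eqref{x1} and \eqref{x2} is routine, exactly as in the proofs of Theorems~\ref{th1}--\ref{th2}.
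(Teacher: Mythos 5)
Your proposal is correct and follows essentially the same route as the paper: form the Lagrangian for (P3), apply the KKT stationarity condition, solve the resulting transcendental equation via the principal Lambert-$W$ branch, and determine $(s^*,\mu^*)$ from the normalization and active-divergence equations \eqref{x1}--\eqref{x2}. Your version is in fact slightly more careful than the paper's \eqref{x0} about the additive constant arising from differentiating $\int f\log(f/f_0)\,dx$, but since that constant is absorbed into $\mu^*$, the final distribution is identical.
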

\begin{proof}
The proof is given in Appendix \ref{ap3}.
\end{proof}

\vspace{-0.3cm}
\subsection{Known class/type of the true distribution- case studies}

Here, we assume that although there is uncertainty for the distribution $f(x)$, we know the exact class/type of the unknown distribution. This assumption can be supported by a more strict distribution estimation process, which provides the type of the distribution in addition to the maximum statistical distance $d$; this case has mainly theoretical interest and allows to further investigate the impact of the divergence between the selected and the true model on the system performance. The following discussion refers to two specific case studies. 
 
\noindent {\bf Case study 1} (Exponential distributions): if the nominal and the actual distribution are exponential with parameters $\lambda_0$ and $\lambda_1\geq \lambda_0$, respectively, the KL divergence is written as
\begin{align}
\mathcal{D}_{\text{KL}}(f_0||f)&=\!\int_{0}^{\infty}\!\lambda_0 \exp(-\lambda_0 x)\log \frac{\lambda_0 \exp(-\lambda_0 x)}{\lambda_1 \exp(-\lambda_1 x)}dx \nonumber \\ 
&= \log \frac{\lambda_0}{\lambda_1}+\frac{\lambda_1-\lambda_0}{\lambda_0}.
\end{align}
In this case, we have $\mathcal{D}_{\text{KL}}(f_0||f)\leq d\Rightarrow \frac{\lambda_1}{\lambda_0}\exp(-\frac{\lambda_1}{\lambda_0})\geq \exp(-d-1)$. Since the function $g(x)=x\exp(-x)$ is monotonically decreasing for $x\geq 1$ (i.e., $g'(x)\leq 0$ with $x\geq1$), the minimum average expected value is achieved at the boundary i.e., $\frac{\lambda_1}{\lambda_0}\exp(-\frac{\lambda_1}{\lambda_0})=\exp(-d-1)$ which gives $\lambda_1^*=\max[-\lambda_0W_0(-\exp(-d-1)),\lambda_0(1+d)]$, where $W_0(\cdot)$ is the principal branch of the LambertW function \cite{GRA}; in a similar way, for the  symmetric case ($\mathcal{D}_{\text{KL}}(f||f_0)$) we have $\lambda_1^*=\max[-\lambda_0/W_0(-\exp(-1-d)),\lambda_0/(1+d)]$.
For the case of the symmetrized divergence, we have $\mathcal{D}_\text{SYM}(f_0,f)=\frac{\lambda_1}{\lambda_0}+\frac{\lambda_0}{\lambda_1}-2$ and therefore $\mathcal{D}_\text{SYM}(f_0,f)\leq d \Rightarrow \lambda_1^2-2\lambda_0 (d+1)\lambda_1+\lambda_0^2\leq 0$ with solution $\lambda_1 \in [\lambda_0(d+1)-\lambda_0\sqrt{d(d+2)},\lambda_0(d+1)+\lambda_0\sqrt{d(d+2)}]$; therefore $\lambda_1^*=\lambda_0(d+1)+\lambda_0\sqrt{d(d+2)}$. For all the cases, the minimum  average harvested energy is equal to $\mathcal{E}=1/\lambda_1^*$. 

\noindent {\bf Case study 2} (Uniform distributions): If the nominal and the actual distributions are uniform over the interval $[0,\alpha]$ and $[0,\beta]$ with $\beta\leq \alpha$, respectively, we obtain 
$\mathcal{D}_{\text{KL}}(f||f_0) \leq d\Rightarrow \log\frac{\alpha}{\beta} \leq d \Rightarrow \beta\geq \alpha \exp(-d)$,
and therefore the solution of the optimization problem considered is achieved at the boundary i.e., $\mathcal{E}=\frac{\alpha\exp(-d)}{2}$. 
As for KL asymmetry and the symmetric divergence, $f_0(x)$ is not dominated by $f(x)$ and thus we have $\mathcal{D}_{\text{KL}}(f_0||f)=\infty$ \cite[Def. 6.1]{MOU}; therefore, these two metrics have not practical interest.

\begin{figure}[t]
\centering
\includegraphics[width=0.8\linewidth]{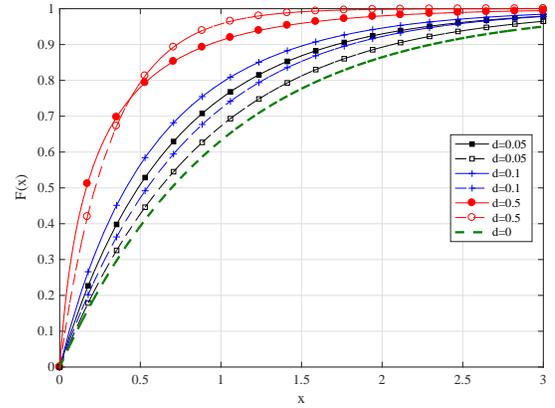}
\caption{CDF of the actual distribution for different values of $d$; solid-lines and dashed-lines correspond to the KL divergence $\mathcal{D}_{\text{KL}}(f_0||f)$ and $\mathcal{D}_{\text{KL}}(f||f_0)$, respectively; nominal distribution $f_0(x)=\exp(-x)$.}\label{fig0}
\end{figure}      

\begin{figure}[t]
\centering
\includegraphics[width=0.75\linewidth]{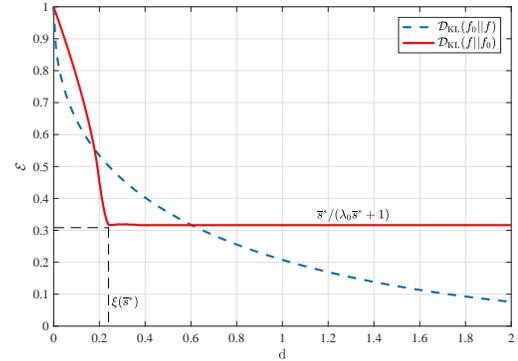}
\caption{Minimum average energy harvested (worst case performance) versus $d$ with $f_0(x)=\exp(-x)$; for the $\mathcal{D}_{\text{KL}}(f||f_0)$ case,  we have $\xi'(s)=0\Rightarrow \overline{s}^*=0.46$, and $\mathcal{E}=\frac{\overline{s}^*}{1+\lambda_0 \overline{s}^*}=0.31$.}\label{fig1}
\end{figure}

\begin{figure}[t]
\centering
\includegraphics[width=0.85\linewidth]{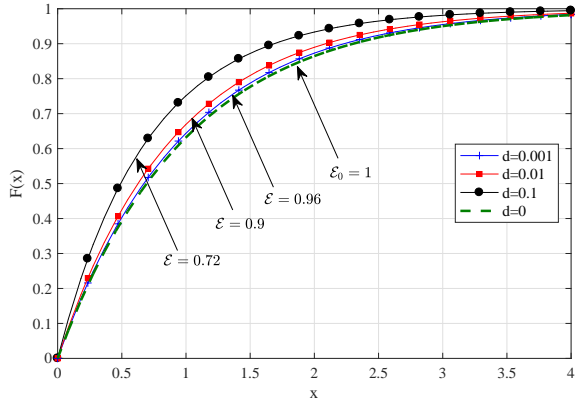}
\caption{CDF of the actual distribution under symmetrized divergence uncertainty for different values of $d$; nominal distribution $f_0(x)=\exp(-x)$.}\label{fig2}
\end{figure}

\begin{figure}[t]
\centering
\includegraphics[width=0.65\linewidth]{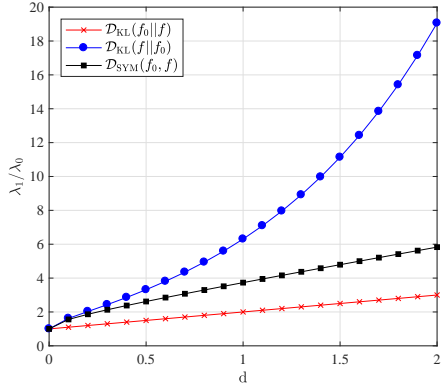}
\caption{Uncertainty with known class of distribution- $\lambda_1/\lambda_0$ versus the distance $d$ for an exponential nominal distribution $f_0(x)=\lambda_0 \exp(-\lambda_0 x)$.}\label{fig3}
\end{figure} 

\vspace{-0.3cm}
\section{Numerical results}     
Computer simulations are carried-out to validate the proposed mathematical framework. For the sake of simplicity, the numerical results concern a normalized exponential nominal distribution i.e., $f_0(x)=\lambda_0 \exp(-\lambda_0 x)$ with $\lambda_0=1$.   
      
Fig. \ref{fig0} shows the CDF of the actual distribution for different values of $d$ for both KL divergence metrics considered. As it can be seen, a higher KL distance $d$ shifts the CDF curves from right to left and increases the gap between nominal and actual distribution. We can see that the uncertainty significantly affects the harvested energy and an increase of the distance $d$ results in a more conservative EH performance. Another interesting observation is that both KL divergence metrics follow the same behaviour/trend and result in same general conclusions; however, it worth noting that $\mathcal{D}_{\text{KL}}(f||f_0)$ results in a more conservative actual distribution than $\mathcal{D}_{\text{KL}}(f_0||f)$ for low $d$. As the distance $d$ increases, we have the opposite behaviour which is also justified by the asymptotic performance of the two metrics (see Fig. \ref{fig1} for $d=0.5$).  

Fig. \ref{fig1} plots the minimum average harvested energy (worst case performance) versus the KL distance $d$ for both considered KL divergence metrics. As it can be seen, the worst performance significantly decreases as the distance $d$ increases. This plot also validates our asymptotic results for high values of $d$ (i.e., $d\rightarrow \infty$) which are discussed in Section \ref{sec2}. It can be seen that swapping the distributions (nominal and true) results in a floor effect; $\mathcal{E}$ asymptotically converges to zero for the metric $\mathcal{D}_{\text{KL}}(f_0||f)$, while $\mathcal{E}$ converges to a constant floor that depends on the KL distance $d$ i.e., $\overline{s}^*/(\lambda \overline{s}^*+1)$ for the metric $\mathcal{D}_{\text{KL}}(f||f_0)$.

In Fig. \ref{fig2}, we show the CDF of the actual distribution for different values of $d$, when statistical distance corresponds to the symmetrized divergence metric. It can be seen that the CDF is shifted towards its left as the distance $d$ increases; therefore the associated energy outage probability increases. This behaviour is inline with the observations in Fig. \ref{fig0}. However, it is worth noting that the symmetrized divergence results in a CDF which is between the CDFs associated with  $\mathcal{D}_{\text{KL}}(f_0||f)$ and $\mathcal{D}_{\text{KL}}(f||f_0)$. Therefore, the symmetrized divergence consists of a balance between the two KL metrics considered. In this figure, we also show the minimum harvested energy (solution to (P3)) by following the theoretical framework in Section \ref{sd}.

Finally, Fig. \ref{fig3} deals with the actual distribution when the class/type of the true distribution is known. Specifically, we assume that nominal/actual distributions are exponential with parameters $\lambda_0$ and $\lambda_1$, respectively, and we plot $\lambda_1/\lambda_0$ versus $d$ for the statistical divergence metrics considered. It can be seen that the difference between the true and the nominal distributions increases as the divergence $d$ increases. In addition, we can see $\mathcal{D}_{\text{KL}}(f_0||f)$ provides a better estimation of $f(x)$ (i.e., $\lambda_1/\lambda_0$ increases linearly with $d$), while  $\mathcal{D}_{\text{KL}}(f||f_0)$ results in an exponential difference. 

\vspace{-0.2cm}
\section{Conclusion}

A mathematical framework that integrates EH uncertainty in the current deterministic WPT models has been proposed. By exploiting the notion of the compound channel model, we have studied the worst performance (average harvested energy) when the actual end-to-end distribution is within a given KL/symmetrized maximum statistical distance from a nominal distribution. General closed-form expressions that hold for any nominal distribution as well as simplified expressions that refer to an exponential nominal distribution have been derived. Theoretical results show that distribution uncertainty significantly affects EH performance and therefore its integration to current WPT models is essential for a robust design.         

\vspace{-0.1cm}

\appendices 
\section{Proof of Theorem \ref{th1}} \label{ap1}
\vspace{-0.1cm}

Since the problem is convex (i.e., linear objective function and convex constraints), KKT conditions are necessary and sufficient for optimality \cite{BOY}. The Lagrangian function of the problem is written as 
\begin{align}
&L=\int x f(x) dx \nonumber \\
&\!+\! s \left(\int\!\!\ f_0(x)\log \frac{f_0(x)}{f(x)}dx-d\right)+\mu \left(\int\!\! f(x)dx-1\right),
\end{align}
where $s$ and $\mu$ are the Lagrange multipliers of two constraints.  The associated KKT conditions are given as follows \cite{BOY}
\begin{align}
 x-s\frac{f_0(x)}{f(x)}+\mu=0&, \label{eq1}  \\
\int f(x)dx-1 =0&, \label{eq2} \\
s \left(\int f_0(x)\log \frac{f_0(x)}{f(x)}dx-d \right)=0&, \label{eq3} \\
s\geq 0&.
\end{align}
\noindent From the complementary slackness condition in \eqref{eq3}, for $s>0$ the minimum is achieved at the boundary. In this case, by combining \eqref{eq1} and \eqref{eq2} and after some manipulations, we have
\begin{align}
f(x)=\frac{f_0(x)}{q(\mu^*)(x+\mu^*)},
\end{align}
\vspace{-0.2cm}
where
\vspace{-0.2cm}
\begin{align}
q(\mu)=\int \frac{f_0(x)}{x+\mu}dx,
\end{align}
and $\mu^*$ is the unique solution of the nonlinear equation 
\begin{align}
\int f_0(x) \log[q(\mu)(x+\mu)]dx=d,
\end{align}
which can be solved numerically (e.g., Newton-Raphson method).

\vspace{-0.4cm}
\section{proof of Theorem \ref{th2}}\label{ap2}

Since the asymmetric KL divergence is also convex with respect to $f(x)$ \cite{MOU}, the optimization problem remains convex; we formulate the Lagrange function i.e. 
\vspace{-0.2cm}
\begin{align}
&L_2=\int x f(x)dx \nonumber \\
&+s \left(\int\! f(x) \log \frac{f(x)}{f_0(x)}dx-d   \right)\!+\!\mu \left(\int\! f(x)dx-1 \!\right).
\end{align}      
\vspace{-0.2cm}
The associated KKT conditions \cite{BOY} are written as       
\begin{align}
x+s\left(\log\frac{f(x)}{f_0(x)}  \right)+\mu=0&, \\
\int f(x)dx=1&, \\
s\left(\int f(x) \log \frac{f(x)}{f_0(x)}dx-d \right)=0&. \label{d1} \\
s\geq 0&.
\end{align}  
For $s>0$, the optimal solution is located at the boundary (complementary slackness in \eqref{d1}); by combining the above equations, we have      
\begin{align}
f(x)=\frac{\psi_0(x,s^*)}{\psi_1(s^*)},
\end{align}
where 
\begin{align}
&\psi_0(x,s)=\exp(-x/s)f_0(x), \\
&\psi_1(s)=\int \psi_0(x,s)dx, \\
&\zeta(s)=\int x \psi_0(x,s)dx.
\end{align}
The optimal dual variable $s^*$ can be found by solving numerically the following equality
\begin{align}
-\frac{\zeta(s)}{\psi_1(s)}-s\log \psi_1(s)=sd. \label{s1}
\end{align}
    
\vspace{-0.8cm}
\section{Proof of Theorem \ref{th3}}\label{ap3}

The Lagrangian function for the problem in (P3) is written as
\vspace{-0.5cm}
\begin{align}
&L=\int x f(x)dx \nonumber 
\end{align}

\begin{align}
&+\frac{s}{2}\left(\int f_0(x)\log \frac{f_0(x)}{f(x)}dx
+ \int f(x)\log \frac{f(x)}{f_0(x)}dx \right) \nonumber \\
&+\mu \left(\int f(x)dx-1 \right).
\end{align}
By taking the derivative of $L$ with respect of $f(x)$, we employ the KKT conditions
\begin{align}
x-\frac{s}{2}\left(\frac{f_0(x)}{f(x)}+\log \frac{f_0(x)}{f(x)}\right)+\mu =0&, \label{x0} \\
\int f(x)dx=1&, \label{x1} \\
\frac{s}{2}\!\left(\!\int\! f_0(x)\log \frac{f_0(x)}{f(x)}dx\!+\!\!\!\int\!\!\! f(x)\log \frac{f(x)}{f_0(x)}dx\!-d \!\right)=0&, \label{x2} \\
s\geq 0&.
\end{align}
\vspace{-0.1cm}
By solving \eqref{x0} with respect to $f(x)$, we derive an expression of the actual distribution 
\begin{align}
f(x)=\frac{f_0(x)}{W_0 \left(\exp \left(\frac{2(x+\mu^*)}{s^*} \right) \right)},
\end{align}
where the optimal dual parameters $s^*$ and $\mu^*$ can be computed numerically by solving a (two-dimensional) nonlinear system of equation that is defined by \eqref{x1} and \eqref{x2}.

\end{document}